\newtheorem{ass}{Assumption}
\newtheorem{thm}{Theorem}
\newcommand{\reals}{\mathbb{R}}
\newlist{steps}{enumerate}{1}
\setlist[steps, 1]{label = Step \arabic*:}
\begin{document}
\title{Stability Analysis of Hypersampled Model Predictive Control 
\thanks{The authors are with the University of Colorado Boulder.}
\thanks{This research is supported by the NSF-CMMI Award  2046212.}}

\author{Yaashia Gautam, Marco M. Nicotra}

\maketitle

%%*************************************
\begin{abstract}
This paper introduces a new framework for analyzing the stability of discrete-time model predictive controllers acting on continuous-time systems. The proposed framework introduces the distinction between \emph{discretization time} (used to generate the optimal control problem) and \emph{sampling time} (used to implement the controller). The paper not only shows that these two time constants are independent, but also motivates the benefits of selecting a sampling time that is smaller than the discretization time. The resulting approach, hereafter referred to as Hypersampled Model Predictive Control, overcomes the traditional trade-off between performance and computational complexity that arises when selecting the sampling time of traditional discrete-time model predictive controllers.

\end{abstract}

\section{Introduction}
Model Predictive Control (MPC) is a popular 
constrained control strategy for nonlinear systems. The idea behind MPC is to solve an Optimal Control Problem (OCP) at every instant and apply the first step of the optimal control sequence to the system. Although MPC is widely used due to its stability, feasibility, and robustness properties \cite{mayne2000mpc}, its widespread adoption is hindered by the fact that solving the optimal control problem in real time can be challenging. To address this issue, extensive research has been dedicated to the development of increasingly efficient numerical solvers \cite{tenny2004nonlinear,diehl2009efficient,quirynen2015autogenerating,fbstab}. This paper investigates a practical stratagem that reduces computational complexity by decoupling the prediction model (used to formulate the OCP) from the system model (used to prove closed-loop stability). \smallskip

Given a continuous-time system, MPC can be formally implemented in one of three ways: the most popular approach \cite{MPC4} is to discretize the dynamics and use discrete-time MPC (DT-MPC). Another option is sampled-data MPC \cite{magni2004model}, where the OCP features piecewise-constant control inputs and continuous-time dynamics. In both cases, the prediction model is built using the same sampling rate of the controller. Thus, increasing the sampling rate also increases the numerical complexity of the OCP. The final option is Dynamically Embedded MPC (DE-MPC), which replaces the OCP solver with a dynamic compensator that runs parallel to the system \cite{feller2017,DE-MPC}. In this case, the controller is continuous whereas the prediction model is discrete.\smallskip 

This paper provides formal stability proofs for a fourth strategy, hereafter denoted Hypersampled MPC (HMPC), which decouples the sampling rate of the controller from the timestep used to perform trajectory predictions.  
This stratagem, which some practitioners employ despite the absence of theoretical guarantees, reduces the computational burden of MPC by solving a coarse OCP (thereby reducing computational effort) and running the controller at a high sample rate (thereby improving reaction times).\smallskip

The paper is organized as follows. Section III summarizes
the ideal continuous-time MPC formulation
and details its main implementation challenges. Section IV analyzes how the discretization of
the continuous-time OCP affects the stability of the 
closed-loop system.
Section V addresses how the stability of the closed-loop system is affected by the sampling time.
Section VI features numerical comparisons between the proposed method and two traditional discrete-time MPC schemes: one implemented at a high sampling rate and one implemented at a low sampling rate.

\section{Notation}
The paper employs the following norm conventions. $\mathbb{R}^n$ is the set of $n-$dimensional vectors of real numbers, $\mathbb{N}$ is the set of natural numbers, and $\mathbb N^+$ is the set of non-zero natural numbers. Given a vector $x\in\mathbb R^n$, its 2-norm is $\|x\|\coloneqq\sqrt{x^\top x}$. Given a positive semi-definite matrix $Q\geq0$, the matrix norm of vector $x$ is $\|x\|_Q\coloneqq\sqrt{x^\top Qx}$. Given a function $x(t)$, its limit superior satisfies
\begin{equation} \exists~t^\star\geq0:~~\overline{\lim_{t\to\infty}}\|x(t)\|\geq\|x(\tau)\|,~~\forall \tau\geq t^\star,
\end{equation}
and its infinity norm is $\|x(t)\|_\infty\coloneqq\sup_{t\in[0,\infty)} \|x(t)\|$.

\section{Problem Statement}

Consider a continuous-time system
\begin{equation} \label{eq:continuous sys}
    \dot x(t) = f(x(t),u(t))
\end{equation}
where $x \in \mathbb{R}^{n}$ is the state, $u \in \mathbb{R}^m$ is the control input, and $f : \mathbb{R}^{n} \times \mathbb{R}^m \to \mathbb{R}^{n}$ is the system dynamics. The system is subject to state and input constraints $x \in \mathcal{X}$ and $u\in \mathcal{U}$, where $\mathcal{X}\subseteq\mathbb R^n$ and $\mathcal{U}\subseteq\mathbb R^m$ are closed convex sets.
\smallskip

This system can be controlled using continuous-time MPC (CT-MPC) by solving the optimal control problem
\begin{subequations}\label{ct_ocp}
\begin{align}
\min_{\mu(\tau)} ~~& J(\xi(T))+\int_0^T l(\xi(\tau),\mu(\tau))d\tau\\
\mathrm{s.t.} ~~&\xi(0)=x,\\
&\dot \xi(\tau)= f(\xi    (\tau),\mu(\tau))~~\quad\!\!\forall \tau\in[0,T],\\
        &\mu(\tau) \in \mathcal{U},\qquad\qquad\qquad\:\!\!\forall \tau\in[0,T],\\
        &\xi(\tau) \in \mathcal{X},\qquad\qquad\qquad\:\!\!\forall \tau\in[0,T],\\
        &\xi(T) \in \Omega,
\end{align}
\end{subequations}
where $\xi: [0,T] \to \reals^n,\mu : [0,T] \to \reals^m$  are the predicted state and input, $T>0$ is the prediction horizon, $J:\mathbb{R}^n\to\mathbb{R}$ is the terminal cost, $l:\mathbb{R}^n\times\mathbb{R}^m\to\mathbb{R}$ is the incremental cost and $\Omega \subseteq\mathcal X$ is the terminal constraint set. Specifically, the CT-MPC feedback law is $u(x)=\mu^*(0|x)$, where $\mu^*(\tau|x)$ denotes the solution mapping of \eqref{ct_ocp} for a given $x$. As detailed in \cite{mayne2000mpc}, the following assumption is sufficient to ensure that the optimal control problem \eqref{ct_ocp} is recursively feasible and that solving it leads to a stabilizing control action. 

\begin{ass}\label{ct_ocp assumption}
The dynamics $f : \mathbb{R}^{n} \times \mathbb{R}^m \to \mathbb{R}^{n}$ are Lipschitz continuous, stabilizable, and admit the origin as a constraint-admissible equilibrium point.  
The cost functions $J:\mathbb{R}^n\to\mathbb{R}$ and $l:\mathbb{R}^n\times\mathbb{R}^m\to\mathbb{R}$ are convex, twice continuously differentiable, zero at the origin, and upper/lower bounded by quadratic functions. In addition, $\Omega\subseteq\mathcal X$ is a closed and convex set such that there exists a terminal control law $\kappa:\mathbb{R}^n \to \mathbb{R}^{m}$ that satisfies
\begin{subequations}
\begin{align}
\kappa(x) &\in \mathcal{U},&\forall x\in\Omega,\\
f(x,\kappa(x)) &\in T_{\Omega}(x),&\forall x\in\Omega,\\
\nabla J(x)+l(x,\kappa(x)) &\leq 0,&\forall x\in\Omega,
\end{align}
\end{subequations}
where $T_{\Omega}(x)$ is the tangent cone (defined in \cite{BLANCHINI19991747}) of $\Omega$ evaluated in $x$.
\end{ass}

Under these standard assumptions, the following theorem is easily deduced from existing literature \cite{mayne2000mpc}.
\begin{thm}\label{thm1}
Under Assumption \ref{ct_ocp assumption}, there exists a class $\mathcal K$ function $\gamma_1$ and positive scalars $\chi_1,\Delta_1$ such that
\begin{equation}\label{eq:Ideal_CL}
    \dot x=f(x,\mu^*(0|x)+\delta_1)
\end{equation}
satisfies
\begin{equation}
    \overline{\lim_{t\to\infty}}\|x(t)\|\leq \gamma_1 \left( \ \overline{\lim_{t\to\infty}}\|\delta_1(t)\|\right),
\end{equation}
whenever $\|x(0)\| \leq \chi_1$ and $\|\delta_1\|_\infty \leq\Delta_1$.
\end{thm}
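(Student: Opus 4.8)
The plan is to use the optimal value function of the OCP \eqref{ct_ocp}, namely
\begin{equation}
V(x) = J(\xi^*(T|x)) + \int_0^T l(\xi^*(\tau|x),\mu^*(\tau|x))\,d\tau,
\end{equation}
as a candidate local ISS-Lyapunov function for the perturbed closed loop \eqref{eq:Ideal_CL}, following the standard construction in \cite{mayne2000mpc}. The argument reduces to three ingredients: sandwiching $V$ between class $\mathcal K$ bounds, establishing a strict nominal descent, and showing that the input perturbation $\delta_1$ degrades this descent by at most a class $\mathcal K$ term in $\|\delta_1\|$.

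First I would establish the comparison bounds. The quadratic lower bound on $l$ in Assumption \ref{ct_ocp assumption} yields $V(x)\geq \alpha_1(\|x\|)$ directly from the integrated stage cost, while the terminal control law $\kappa$ furnishes a feasible, suboptimal trajectory on a neighborhood of the origin whose cost is quadratically bounded, giving $V(x)\leq \alpha_2(\|x\|)$ for $\|x\|$ small, with $\alpha_1,\alpha_2\in\mathcal K$. The terminal conditions $\kappa(x)\in\mathcal U$, $f(x,\kappa(x))\in T_\Omega(x)$, and $\nabla J(x)+l(x,\kappa(x))\leq 0$ are precisely what guarantee recursive feasibility and the nominal Hamilton--Jacobi--Bellman decrease
\begin{equation}
\nabla V(x)^\top f(x,\mu^*(0|x)) \leq -l(x,\mu^*(0|x)) \leq -\alpha_3(\|x\|),
\end{equation}
with $\alpha_3\in\mathcal K$ inherited from the lower bound on $l$.

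Next I would inject the perturbation. Writing $u=\mu^*(0|x)+\delta_1$ and invoking the Lipschitz continuity of $f$ in $u$ from Assumption \ref{ct_ocp assumption}, I would split
\begin{equation}
\frac{d}{dt}V(x) = \nabla V(x)^\top f(x,\mu^*(0|x)) + \nabla V(x)^\top\big[f(x,u)-f(x,\mu^*(0|x))\big],
\end{equation}
bound the second term by $L_f\|\nabla V(x)\|\,\|\delta_1\|$, and absorb it using a local Lipschitz bound on $\nabla V$ to obtain $\tfrac{d}{dt}V(x)\leq -\alpha_3(\|x\|)+\sigma(\|\delta_1\|)$ for some $\sigma\in\mathcal K$. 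A standard local ISS-Lyapunov theorem then produces the gain $\gamma_1$ and the radii $\chi_1,\Delta_1$: the scalar $\chi_1$ is chosen so that the initial sublevel set of $V$ lies in the region where the bounds hold, and $\Delta_1$ is taken small enough that the perturbed flow remains in this region, after which the claimed limit-superior estimate follows.

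The main obstacle will be that $V$ is in general only locally Lipschitz rather than continuously differentiable, so the gradient expressions above must be interpreted through a generalized derivative, e.g. the Clarke gradient $\cdiff V$ or a lower Dini derivative. The convexity and twice-differentiability hypotheses in Assumption \ref{ct_ocp assumption} regularize $V$ enough that the decrease condition holds in this generalized sense and the comparison argument still goes through; making this step rigorous---rather than the perturbation bookkeeping, which is routine---is the delicate part.
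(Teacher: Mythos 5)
Your proposal is correct in its essential structure, but it takes a genuinely different route from the paper. The paper's proof is a two-step citation argument: it invokes \cite[Section 3.6]{mayne2000mpc} to assert that the origin of \eqref{eq:Ideal_CL} is \emph{exponentially} stable when $\delta_1=0$, and then appeals to the standard robustness fact that a locally exponentially stable system with Lipschitz dynamics is locally ISS with respect to sufficiently small additive input perturbations. You instead construct a local ISS-Lyapunov function directly, using the optimal value function $V$ of \eqref{ct_ocp}: comparison bounds from the quadratic sandwich on the costs, nominal decrease from the terminal conditions in Assumption \ref{ct_ocp assumption}, and a class $\mathcal K$ degradation term from the Lipschitz continuity of $f$ in $u$. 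What your route buys is a self-contained argument with explicit gains $\sigma$ and $\gamma_1$; what it costs is exactly the obstacle you flag, namely that $V$ is in general only locally Lipschitz (and, with state constraints, its regularity is delicate even locally), so the decrease must be run through Dini or Clarke derivatives. The paper's route sidesteps this entirely: by quoting exponential stability and then using a converse Lyapunov theorem, the Lyapunov function used for the perturbation analysis is smooth by construction rather than inherited from the OCP, and all value-function regularity issues are absorbed into the citation. Note also that the paper's step from exponential stability to local ISS implicitly requires the closed-loop vector field $x\mapsto f(x,\mu^*(0|x))$ to be locally Lipschitz, a regularity property of the solution mapping that is no more innocent than the smoothness issue you confront; in that sense the two proofs hide comparable technical debt in different places.
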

\begin{proof} As detailed in
\cite[Section 3.6]{mayne2000mpc}, the origin of \eqref{eq:Ideal_CL} is exponentially stable for $\delta_1=0$. Therefore, it is locally Input-to-State Stable (ISS) for sufficiently small $\delta_1\neq0$.
\end{proof}

This result states that CT-MPC is robust with respect to additive input disturbances. Unfortunately, its implementation 
poses two major challenges: i) the OCP \eqref{ct_ocp}  is formulated in a Banach function space and requires sophisticated solvers, e.g. \cite{HAUSER2002377}, to compute $\mu^*(\tau|x)$. ii) Implementing the controller in continuous-time requires solving \eqref{ct_ocp} ``instantaneously'', which is not realistic in practice.\smallskip

Due to the limitations of CT-MPC, it is common practice to discretize the system dynamics and employ DT-MPC. This approach assumes that the dynamic model of the system matches the prediction model of the OCP. An unfortunate consequence is that, for a fixed prediction horizon, reducing the sampling time inevitably leads to an increased number of prediction steps. This has the combined negative effect of increasing the numerical complexity of the OCP while also decreasing the allocated time for solving it. 

\smallskip

This paper seeks to formalize the distinction between \emph{discretization time} $t_d$, i.e., the step size used to discretize the dynamic model \eqref{eq:continuous sys}, and \emph{sampling time} $t_s$, i.e., the time at which the controlled is implemented. Doing so will require substantially different stability proofs from traditional MPC literature (e.g. \cite{mayne2000mpc,MPC4,magni2004model}), which heavily relies on having a prediction model that matches the system dynamics.

\section{Discretized Optimal Control Problem}
This section addresses the first challenge of CT-MPC, which is that \eqref{ct_ocp} is an infinite dimensional OCP. Given $N \in \mathbb{N}^+$, let $t_d = \nicefrac{T}{\!\,}_N$ be the discretization time and let $f_d : \mathbb{R}^{n} \times \mathbb{R}^m \to \mathbb{R}^{n}$ be a discrete-time model such that
\begin{equation}
\lim_{t_d\to0}\frac{f_d(x,u)-x}{t_d}=f(x,u).
\end{equation}
Then, the continuous-time optimal control problem \eqref{ct_ocp} can be approximated as
\begin{subequations}\label{dt_ocp}
\begin{align}
\min_{\mu_j} ~~& J(\xi_N)+\sum_{j=0}^{N-1}l_d(\xi_j,\mu_j)\\
\mathrm{s.t.} ~~&\xi_0=x,\\
&\xi_{j+1}= f_d(\xi_j,\mu_j) ~~\quad\forall j\in[0,N-1],\\
        &\mu_j \in \mathcal{U},\qquad\qquad\qquad\!\!\forall j\in[0,N-1],\\
        &\xi_j \in \mathcal{X},\qquad\qquad\qquad\!\!\forall j\in[1,N-1],\\
        &\xi_N \in \Omega,
\end{align}
\end{subequations}
with $l_d(x,u)=t_d\,l(x,u)$.  Let $\mu^*_j(x)$ denote the solution mapping of \eqref{dt_ocp}. The following assumption ensures that $\mu^*_j(x)$ is a suitable approximation of $\mu^*(jt_d\,|x)$. 
\begin{ass}
\label{ass:regular}
The parametrized optimal control problem \eqref{ct_ocp} is strongly regular. Moreover, its discrete approximation \eqref{dt_ocp} is strongly regular and admits $T_d>0$ and $\chi_0>0$ such that the discretization error $\Delta \mu=\mu_0^*(x)-\mu^*(0|x)$ satisfies
\begin{equation}\label{lemma_result}
    \|\Delta \mu\| \leq L(t_d)\|x\|, \qquad \forall \ \|x\|\leq \chi_0,
\end{equation}
where $L(t_d)>0,~\forall t_d\in(0,T_d]$ is a Lipschitz constant,  and 
\begin{equation}
\lim_{t_d\to0}L(t_d)=0.
\end{equation}
\end{ass}

Although Assumption \ref{ass:regular} is formally difficult to verify (see \cite{dontchev2019lipschitz} for the case $f_d(x,u)=x+t_df(x,u)$ and  $\mathcal{X},\Omega=\reals^n$), it is reasonable to expect that \eqref{ct_ocp} and \eqref{dt_ocp} converge to the same solution as the discretization step $t_d$ goes to zero. The following theorem states that, given a sufficiently small $t_d$, the CT-MPC law $u(x)=\mu^*(0|x)$, which relies on the solution to  \eqref{ct_ocp}, can be replaced by the approximate control law $u(x)=\mu_0^*(x)$, obtained by solving \eqref{dt_ocp}.

\begin{thm}\label{thm2}
Under Assumptions \ref{ct_ocp assumption}-\ref{ass:regular}, given a sufficiently small discretization step $t_d >0$, there exist positive scalars $\chi_2,\Delta_2$ such that the origin of the closed-loop system
\begin{equation}\label{eq:MPC_DTOCP}
    \dot x = f(x,\mu^*_0(x)+\delta_2)
\end{equation}
is locally ISS with state and input restrictions $\|x(0)\|\leq \chi_2 $ and $\|\delta_2\|_\infty\leq\Delta_2$.
\end{thm}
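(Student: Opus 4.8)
The plan is to interpret the discretization error as an additional input disturbance, thereby reducing the claim to the robustness property already established in Theorem \ref{thm1}. First I would rewrite the closed-loop vector field by adding and subtracting the continuous-time optimal input,
\begin{equation}
    f(x,\mu_0^*(x)+\delta_2)=f\big(x,\mu^*(0|x)+\delta_1\big),\qquad \delta_1\coloneqq \Delta\mu+\delta_2,
\end{equation}
where $\Delta\mu=\mu_0^*(x)-\mu^*(0|x)$ is exactly the discretization error controlled by Assumption \ref{ass:regular}. This puts the system in the form \eqref{eq:Ideal_CL}, so that Theorem \ref{thm1} applies with $\delta_1$ playing the role of the effective disturbance.

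Next I would bound $\delta_1$ in terms of the state and the true disturbance. By \eqref{lemma_result}, as long as $\|x(t)\|\leq\chi_0$ one has $\|\delta_1(t)\|\leq L(t_d)\|x(t)\|+\|\delta_2(t)\|$. Substituting this into the asymptotic gain guaranteed by Theorem \ref{thm1} yields the implicit estimate
\begin{equation}
    \overline{\lim_{t\to\infty}}\|x(t)\|\leq \gamma_1\!\left(L(t_d)\,\overline{\lim_{t\to\infty}}\|x(t)\|+\overline{\lim_{t\to\infty}}\|\delta_2(t)\|\right).
\end{equation}
Because the proof of Theorem \ref{thm1} rests on exponential stability of the origin, the gain can be taken locally linear, $\gamma_1(r)\leq c_1 r$ near the origin. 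I would then impose the small-gain condition $c_1 L(t_d)<1$, which is feasible precisely because $\lim_{t_d\to0}L(t_d)=0$; this is where the requirement of a sufficiently small $t_d$ enters. Solving the implicit inequality produces the explicit limsup bound
\begin{equation}
    \overline{\lim_{t\to\infty}}\|x(t)\|\leq \frac{c_1}{1-c_1 L(t_d)}\,\overline{\lim_{t\to\infty}}\|\delta_2(t)\|,
\end{equation}
i.e. local ISS with the linear gain $\gamma_2$.

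To certify the restrictions $\chi_2,\Delta_2$ I would close the argument with a continuation/bootstrapping step. Using the transient-plus-gain form of the ISS estimate from Theorem \ref{thm1}, let $X_t\coloneqq\sup_{s\in[0,t]}\|x(s)\|$ and derive the same small-gain inequality for the running supremum,
\begin{equation}
    X_t\leq \frac{\beta(\|x(0)\|,0)+c_1\|\delta_2\|_\infty}{1-c_1L(t_d)},
\end{equation}
with $\beta\in\mathcal{KL}$ the transient bound. I would then choose $\chi_2\leq\chi_1$ and $\Delta_2\leq\Delta_1-L(t_d)\chi_0$ small enough that the right-hand side stays strictly below $\min\{\chi_0,\chi_1\}$; a standard argument then shows the trajectory never exits the ball $\|x\|\leq\chi_0$ on which \eqref{lemma_result} is valid, so every bound invoked above holds for all $t\geq0$.

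The main obstacle I anticipate is the self-referential structure of $\delta_1$: since the discretization error feeds the state back into the disturbance channel, Theorem \ref{thm1} cannot be applied directly but must be closed through a small-gain argument. This in turn hinges on establishing that the local ISS gain $\gamma_1$ is at least locally linear — a consequence of the exponential stability used in the proof of Theorem \ref{thm1}, but one that should be stated explicitly — and on carefully tracking the restriction sets so that the trajectory provably remains inside the region $\|x\|\leq\chi_0$ where Assumption \ref{ass:regular} is in force.
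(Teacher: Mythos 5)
Your proposal is correct and follows essentially the same route as the paper's proof: both rewrite the closed loop as $\dot x = f(x,\mu^*(0|x)+\Delta\mu+\delta_2)$, treat the discretization error $\Delta\mu$ as a state-dependent disturbance bounded through Assumption \ref{ass:regular}, and close the resulting feedback interconnection with a small-gain argument whose feasibility rests on $\lim_{t_d\to0}L(t_d)=0$. The only differences are presentational: the paper keeps the gain nonlinear and states the small-gain condition as $L(t_d)\gamma_1(s)<s$ for $s\leq\Delta_1$ instead of linearizing $\gamma_1$ via exponential stability, and it asserts directly (rather than deriving by bootstrapping) that restrictions $\chi_2,\Delta_2$ can be chosen so the trajectory stays inside $\min(\chi_0,\chi_1)$ with $\|\Delta\mu(t)\|\leq\Delta_1-\Delta_2$.
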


\begin{proof}
System \eqref{eq:MPC_DTOCP} can be rewritten as 
\begin{equation}\label{eq:thm2}
    \dot x = f(x,\mu^*(0|x)+\Delta \mu+\delta_2).
\end{equation}
Given $\delta_2=0$, it follows from Theorem \ref{thm1} that system \eqref{eq:thm2} is locally ISS with asymptotic gain
\begin{equation}
\overline{\lim_{t\to\infty}}\|x(t)\|\leq \gamma_1 \left( \ \overline{\lim_{t\to\infty}}\|\Delta \mu(t)\|\right).
\end{equation}
Moreover, it follows from Assumption \ref{ass:regular} that the solution error between \eqref{ct_ocp} and \eqref{dt_ocp} satisfies% Lemma \ref{lemm: discretization},
\begin{equation}\overline{\lim_{t\to\infty}}\|\Delta \mu(t)\|\leq L(t_d) \:\overline{\lim_{t\to\infty}}\|x(t)\|.
\end{equation}
Referring to Figure \ref{interconnect1}, it follows from the small gain theorem that the interconnection between the dynamic system \eqref{eq:thm2} and the static system $\Delta \mu=\mu_0^*(x)-\mu^*(0|x)$, is locally ISS with respect to $\delta_2\neq0$ if the  condition
\begin{equation}\label{eq:small_gain}
    L(t_d)\gamma_1(s)<s
\end{equation}
is satisfied for all $s\leq\Delta_1$. Since Assumption \ref{ass:regular} guarantees $\lim_{t_d\to0}L(t_d)=0$, it is always possible to select a sufficiently small $t_d$ such that \eqref{eq:small_gain} holds. Given an input restriction $\Delta_2\in(0,\Delta_1)$, it is then possible to select a state restriction $\chi_2>0$ such that $\|x(0)\|\leq\chi_2$ ensures $\|x(t)\|\leq\min(\chi_0,\chi_1)$ and $\|\Delta\mu(t)\|\leq\Delta_1-\Delta_2$, $\forall t\geq0$.
\end{proof}

Theorem \ref{thm2} states that it is possible to implement an approximate CT-MPC law by solving \eqref{dt_ocp} instead of \eqref{ct_ocp}. Although \eqref{dt_ocp} is a standard Non-Linear Program (NLP), which is arguably simpler to solve compared to \eqref{ct_ocp}, it would  be unrealistic to assume that the NLP solution can be computed ``instantaneously''.

\section{Hypersampled Model Predictive Control}
This section addresses the second challenge of CT-MPC, which is the fact that solving \eqref{dt_ocp} requires a finite amount of time. 
To address this issue, let $t_s>0$ be the sampling time of the control law, let 
\begin{equation}\label{eq:sampled_state}
x_k(t)=x(kt_s),\qquad\forall t\in[kt_s,(k\!+\!1)t_s). 
\end{equation}
with $k\in\mathbb N$, be the sampled state of the system, and let $u(t)=\mu_0^*(x_k(t))$ be a zero-order hold control input.

 The following theorem states that, given a sufficiently small sampling time, the closed-loop system maintains all of its local ISS properties.
 \begin{figure}
 \vspace{10pt}
    \centering
    \includegraphics[trim=5cm 8cm 10cm 4cm, clip=true, scale=0.5]{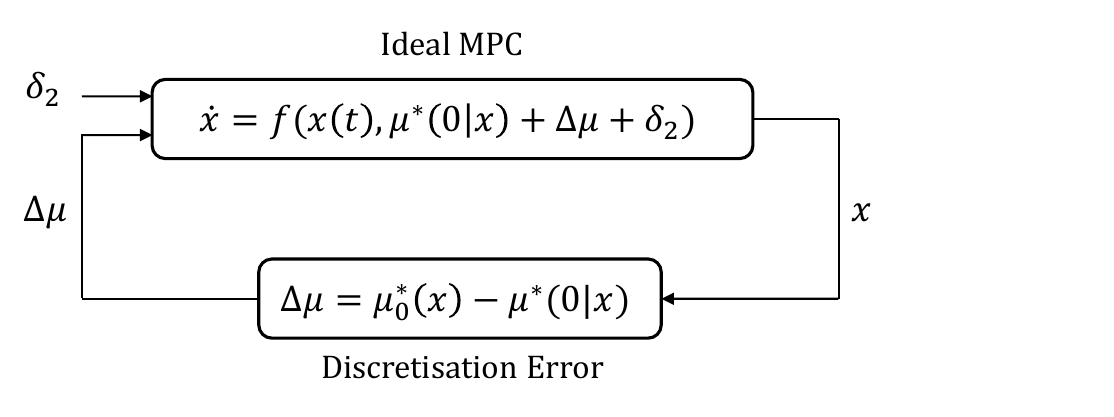}
    \caption{Discretized Model Predictive Control law applied to the continuous system. The interconnection is ISS with respect to the error introduced by discretization.
    \label{interconnect1}}
\end{figure}
\vspace{-5pt}
\begin{figure*}
\vspace{10pt}
    \centering
    \includegraphics[trim=1cm 5cm 7cm 3.5cm, clip=true, scale=0.47]{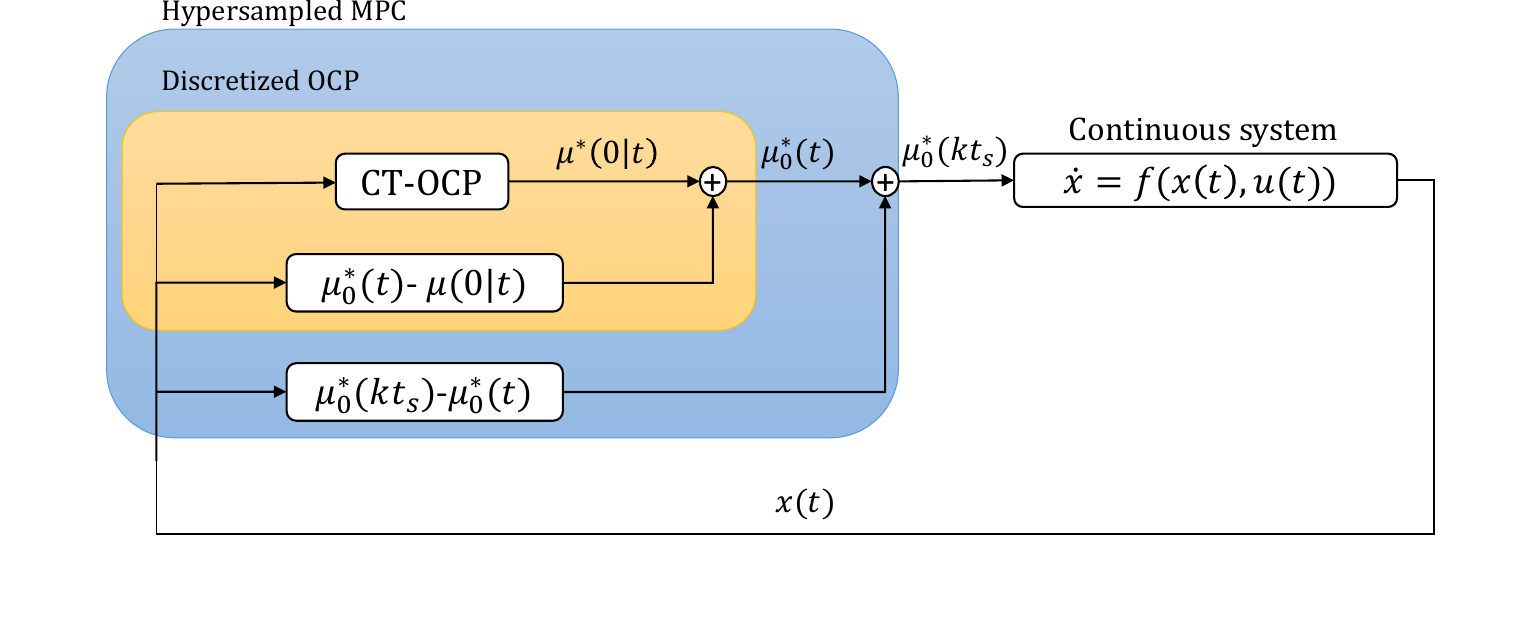}
    \caption{Hypersampled Model Predictive Control. The continuous system is sampled at timestep $t_s$ to generate $x_k$, which is then used to compute the optimal control law to the discretized MPC problem. \label{interconnect2}}

\end{figure*}
\smallskip

\begin{thm}\label{thm3}
Under Assumptions \ref{ct_ocp assumption}-\ref{ass:regular}, given a sufficiently small discretization step $t_d>0$ and sampling time $t_s>0$, there exist positive scalars $\chi_3,\Delta_3$ such that the origin of the closed-loop system
\begin{equation}\label{eq:HMPC_cl}
    \dot x = f(x,\mu^*_0(x_k)+\delta_3)
\end{equation}
is locally ISS with state and input restrictions $\|x(0)\|\leq \chi_3 $ and $\|\delta_3\|_\infty\leq\Delta_3$.
\end{thm}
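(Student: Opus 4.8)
The plan is to reuse the small-gain machinery of Theorem \ref{thm2} by treating the sampling operation as one more additive input disturbance. First I would decompose the zero-order-hold control as $\mu_0^*(x_k) = \mu_0^*(x) + \Delta\mu_s$, where $\Delta\mu_s \coloneqq \mu_0^*(x_k(t)) - \mu_0^*(x(t))$ is the error incurred by holding the control computed from the sampled state $x_k$ instead of the current state $x$. This rewrites \eqref{eq:HMPC_cl} as
\[
\dot x = f(x, \mu_0^*(x) + \Delta\mu_s + \delta_3),
\]
so that Theorem \ref{thm2} applies verbatim with the aggregate disturbance $\delta_2 = \Delta\mu_s + \delta_3$: on the restricted set $\|x\|\leq\chi_2$, $\|\delta_2\|_\infty\leq\Delta_2$, the system is locally ISS with some asymptotic gain $\gamma_2$.

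Next I would bound the sampling error. Since \eqref{dt_ocp} is strongly regular (Assumption \ref{ass:regular}), the solution mapping $\mu_0^*(\cdot)$ is locally Lipschitz near the origin, say with constant $L_\mu$, and vanishes at the origin, so $\|\Delta\mu_s\| \leq L_\mu\|x_k - x\|$. The crux is the intersample drift $\|x(t) - x_k\|$ for $t\in[kt_s,(k\!+\!1)t_s)$. Over this interval the control is frozen at $u=\mu_0^*(x_k)$ with $\|u\|\leq L_\mu\|x_k\|$, and since $f$ is Lipschitz with $f(0,0)=0$, a Gr\"onwall argument on $\dot x = f(x,u)$ yields an estimate of the form $\|x(t) - x_k\| \leq C\,t_s\,\|x_k\|$ for a constant $C$ and sufficiently small $t_s$. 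Combining these and using that $\|x_k\|$ and $\|x(t)\|$ remain comparable over one short interval gives
\[
\overline{\lim_{t\to\infty}}\|\Delta\mu_s(t)\| \leq L_s(t_s)\,\overline{\lim_{t\to\infty}}\|x(t)\|,
\]
where $L_s(t_s)\to 0$ as $t_s\to 0$.

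Finally I would close the loop with a second application of the small-gain theorem, exactly as in the proof of Theorem \ref{thm2}: the interconnection of the dynamic ISS system of gain $\gamma_2$ with the static sampling map of gain $L_s(t_s)$ is locally ISS with respect to $\delta_3$ provided $L_s(t_s)\gamma_2(s)<s$ for all $s\leq\Delta_2$. Because $L_s(t_s)\to0$, such a $t_s$ always exists; one then picks $\Delta_3\in(0,\Delta_2)$ and a matching state restriction $\chi_3$ so that $\|x(0)\|\leq\chi_3$ keeps the trajectory inside the region where every preceding bound is valid.

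The hard part will be the intersample drift estimate $\|x(t)-x_k\|\leq C\,t_s\,\|x_k\|$. The subtlety is that this bound must scale \emph{linearly} in the state norm, not merely be $\Oh(t_s)$, so that the sampling error enters the small-gain condition with a state-proportional gain and does not destroy the asymptotic-gain structure inherited from Theorem \ref{thm2}. Making this rigorous requires carefully tracking the implicit coupling between $\|x(\tau)\|$ and $\|x_k\|$ across the interval via Gr\"onwall's inequality and verifying that the region of validity is preserved under sampling.
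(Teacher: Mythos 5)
Your proposal is essentially sound, but it takes a genuinely different route from the paper. The paper's proof is a two-step citation argument: Theorem \ref{thm2} establishes local ISS of the continuous-time closed loop $\dot x = f(x,\mu_0^*(x)+\delta_2)$, and then Theorem 2 of \cite{teel} --- a general result stating that zero-order-hold emulation of an ISS feedback law preserves local ISS for sufficiently fast sampling --- is invoked to conclude directly that \eqref{eq:HMPC_cl} is locally ISS. You instead re-derive that sampled-data step from scratch: you treat the sampling error $\Delta\mu_s=\mu_0^*(x_k)-\mu_0^*(x)$ as a feedback-interconnected signal, bound it using local Lipschitz continuity of $\mu_0^*$ (which indeed follows from strong regularity in Assumption \ref{ass:regular}, together with $\mu_0^*(0)=0$ from Assumption \ref{ct_ocp assumption}) and a Gr\"onwall intersample-drift estimate, and then close the loop with a second small-gain argument that mirrors the proof of Theorem \ref{thm2}. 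What your approach buys: it is self-contained, it produces an explicit sampling gain $L_s(t_s)=\Oh(t_s)$, and it treats discretization error and sampling error within one unified small-gain template. What the paper's approach buys: brevity and generality, since the cited result already packages exactly the Gr\"onwall and region-of-validity bookkeeping that you correctly identify as the hard part.

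One detail in your argument needs repair. The intersample drift must be estimated along the true dynamics $\dot x = f(x,\mu_0^*(x_k)+\delta_3)$, not along $\dot x=f(x,u)$ with $u=\mu_0^*(x_k)$ alone: the frozen input over $[kt_s,(k\!+\!1)t_s)$ includes the disturbance, so the correct estimate is $\|x(t)-x_k\|\leq C\,t_s\left(\|x_k\|+\|\delta_3\|_\infty\right)$, and the purely state-proportional bound $\|x(t)-x_k\|\leq C\,t_s\|x_k\|$ you insist on is false whenever $\delta_3\neq0$. This is fixable rather than fatal: the extra term propagates into $\overline{\lim}\|\Delta\mu_s\|\leq L_s(t_s)\left(\overline{\lim}\|x\|+\|\delta_3\|_\infty\right)$, i.e., it contributes to the exogenous-input gain of the static block rather than to the loop gain, and the small-gain theorem with exogenous inputs tolerates this; the small-gain condition $L_s(t_s)\gamma_2(s)<s$ is unchanged. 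So the ``hard part'' is not forcing the bound to be purely state-proportional (it cannot be); it is verifying that the $\delta_3$-dependent part enters only as an external input to the interconnection.
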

\begin{proof}
    Due to Theorem \ref{thm2}, there exists a sufficiently small $t_d>0$ such that \eqref{eq:MPC_DTOCP} is locally ISS with respect to input disturbances. It then follows from \cite [Theorem 2]{teel} that there exists a sufficiently small $t_s>0$ such that the sampled-data implementation \eqref{eq:HMPC_cl} is also locally ISS.
\end{proof}

Figure \ref{interconnect2} clarifies the various components of HMPC. Theorem \ref{thm3} highlights the fact that the discretization time $t_d$ and the sampling time $t_s$ can be treated as two entirely different entities. This distinction introduces an additional degree of freedom that can be leveraged for HMPC design. In particular, we note the following properties: 
\begin{itemize}
    \item \textbf{Benefits of Decreasing $\bm{t_s}$:} Reducing the sampling time tends to improve the performance of the controller by making it more reactive to external disturbances. Generally speaking, one would like $t_s$ to be as small as possible to mimic continuous-time behavior.
    \item \textbf{Limit for Decreasing $\bm{t_s}$:} Due to real-time implementation requirements, $t_s$ is lower-bounded by the computational time required to solve \eqref{dt_ocp}. 
    \item \textbf{Benefits of Increasing $\bm{t_d}$:} Given a fixed prediction horizon $T$, a larger discretization time step $t_d$ leads to less prediction steps $N$. Since the computational complexity of  \eqref{dt_ocp} scales with $N$, it is generally beneficial for $t_d$ to be as large as possible.
    \item \textbf{Limit for Increasing $\bm{t_d}$:} Since larger values of $t_d$ increase the discrepancy between $\mu^*(0|x)$ and $\mu_0^*(x)$, an upper bound on the maximal admissible error will translate into an upper bound on $t_d$.
\end{itemize}
Since traditional MPC inherently assumes $t_s=t_d$, it can be challenging to reduce the sampling time because doing so increases the computational complexity while tightening the real-time requirements. The HMPC framework overcomes this preconceived trade-off by decoupling the two effects: given a fixed discretization time $t_d$, the OCP complexity is unaffected by the sampling time $t_s$.\smallskip

In practice, any discrete-time MPC algorithm can be implemented as HMPC by selecting a coarse discretization step $t_d$ and then running the controller at a faster sampling rate $t_s$. As a result, the HMPC framework can be adopted without any additional effort from the user. Moreover, since Theorem \ref{thm3} proves that the closed-loop system is locally ISS with respect to an input disturbances $\delta_3$, the proposed scheme can be seamlessly combined with other MPC schemes that rely on ISS-based proofs. Notably, HMPC can be combined with DE-MPC \cite{DE-MPC} by replacing the numerical solver used to compute $\mu_0^*(x)$ with a dynamic compensator that tracks the solution of \eqref{dt_ocp} with a bounded error.

\section{Numerical Examples}
This section illustrates the benefits of distinguishing between $t_d$ and $t_s$ using a linear and a nonlinear example. All timing data is obtained using FBstab \cite{fbstab} to solve the OCP.

\subsection{Double Integrator}

Consider the double integrator system 
\begin{equation}\label{integrator}
    \dot x = \begin{bmatrix}
    0&1\\0&0
    \end{bmatrix}x+\begin{bmatrix}0\\1
    \end{bmatrix}u+\begin{bmatrix}0\\1
    \end{bmatrix}d.
\end{equation}
where $d$ is an external disturbance and the state and input constraint sets are $\mathcal{X} = \{x \in \reals^2 : \|x_1\| \leq 2\ , \|x_2\| \leq 0.4\}$ and $\mathcal{U}= \{u \in \reals: -4 \leq u \leq 10 \}$. Given the initial condition $x_0=[2,0]^\top$ and control horizon $T=2$, consider the OCP
\begin{subequations}\label{OCP}
\begin{align}
        \min ~~& \|\xi(T)\|^2_P+\int_0^T \|\xi(\tau)\|^2_Q+\|\mu(\tau)\|^2_R~d\tau \\
    \mathrm{s.t.}~~&\xi(0)=x,\\&  \dot \xi=A\xi+B\mu,\qquad\quad\forall \tau\in[0,T]\\ 
    &\mu(\tau) \in \mathcal{U},\qquad\qquad\quad\forall \tau\in[0,T],\\
        &\xi(\tau) \in \mathcal{X},\qquad\qquad\quad\forall \tau\in[0,T],\\
        &\xi(T) \in \mathcal{O}_\infty,
\end{align}
\end{subequations}
where $Q=\mathrm{diag}(1,0)$, $R=0.04$, $P$ is the solution to the algebraic Riccati equation $A^\top P\!+\!PA\!-\!PBR^{-1}B^\top P\!+\!Q\!=\!0$, and $\mathcal{O}_\infty$ is the maximal output admissible set \cite{gilbert1991linear} associated to the corresponding linear-quadratic regulator. \\
\noindent To study the effect of discretization, consider the OCP
\begin{subequations}\label{int_dt}
\begin{align}
        \min ~~& \|\xi_N\|^2_P+\sum^{N-1}_{i=0}  \|\xi_i\|^2_{Q_d}+\|\mu_i\|^2_{R_d} \\
    \mathrm{s.t.}~~&  \xi_{i+1}=A_d\xi_i+B_d\mu_i, \qquad \forall i\in[0,N-1]\\
    &\xi_0=x,\\
    &\mu_i \in \mathcal{U},\qquad\qquad\quad\forall i\in[0,N-1],\\
        &\xi_i \in \mathcal{X},\qquad\qquad\quad\forall i\in[1,N]
\end{align}
\end{subequations}
where $A_d, B_d$ are the discretized version of the continuous dynamics \eqref{integrator} and $Q_d=t_dQ$, $R_d=t_dR$, $N=T/t_d\in\mathbb{N}^+$. 
\smallskip
\begin{figure}
    \centering
    \includegraphics[trim=2cm 1.3cm 0.75cm 0cm, clip=true, scale=0.285]{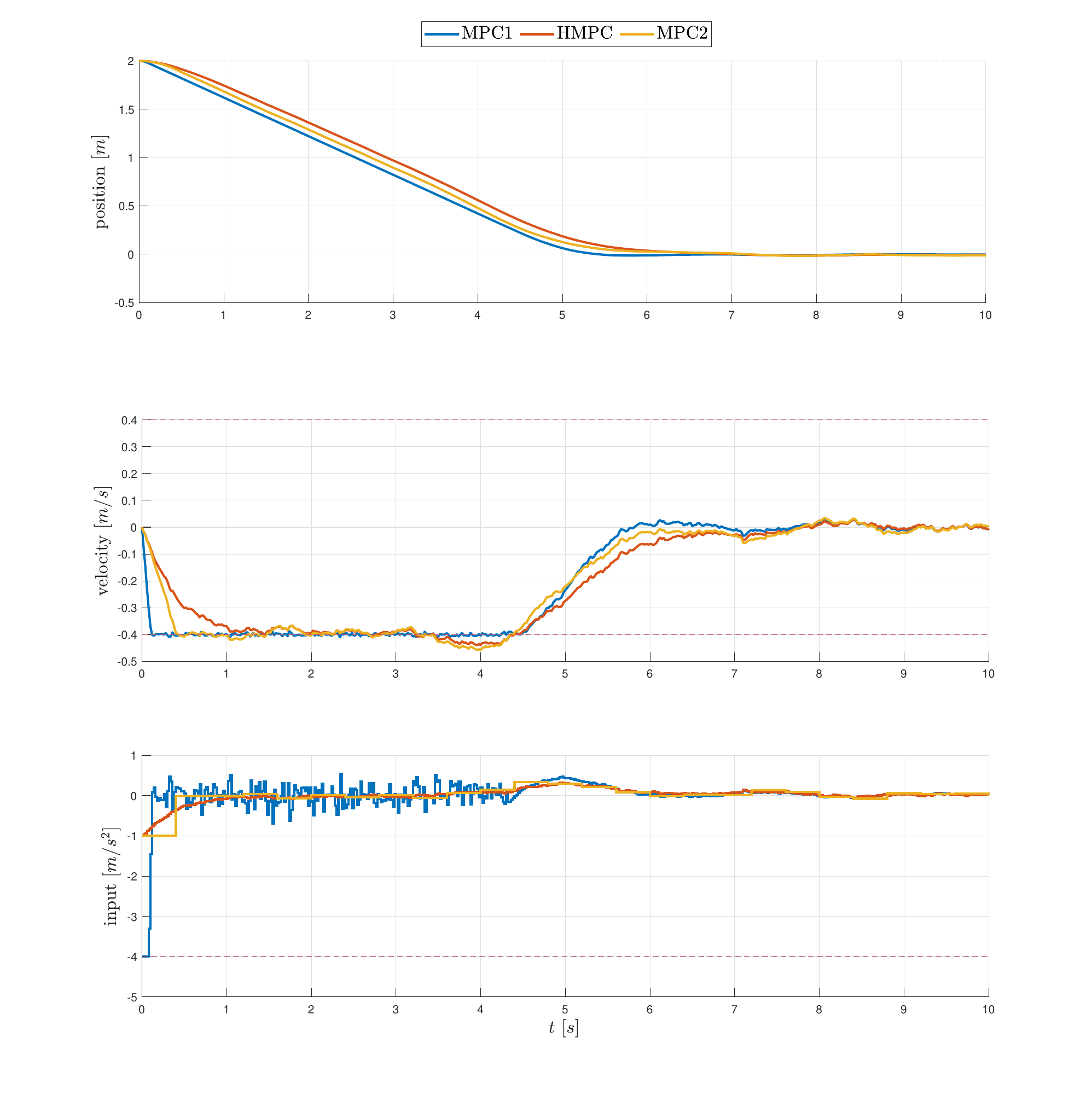}
    \caption{Comparison of the position, velocity, and input trajectories for the three MPC schemes subject to an additive disturbance on the input.\label{traj_int}}
\end{figure}
 
Figure \ref{traj_int} compares the closed loop response obtained using three different schemes: 
\begin{itemize}
    \item MPC1: $\:t_s=t_d=0.02$;
    \item HMPC: $t_s=0.02$ and $t_d=0.4$;
    \item MPC2: $\:t_s=t_d=0.4$.
\end{itemize}
Although the position trajectories for the three schemes perform similarly, the velocity and input trajectories highlight how the three schemes react to external disturbances in proximity to the constraint boundary. MPC1 is the only scheme that enforces the velocity constraints in the presence of disturbances. HMPC and  MPC2 both violate constraints, with HMPC featuring slightly better disturbance rejection properties due to its to faster response time. \smallskip 

The disadvantage of MPC1 becomes apparent by examining the computation time required to solve the underlying OCP \eqref{int_dt}. As shown in Figure \ref{time_di}, the time required to solve MPC1 is approximately 30 times more than HMPC and MPC2. This is due to the fact that MPC1 has a prediction length of $N=100$ steps, whereas HMPC and MPC2 have a prediction length of only $N=5$ steps.
\smallskip

The interest in the HMPC framework is that it enables the following design choice: ``Given a sampling time $t_s$, is it possible to reliably solve discretized OCP with $t_d=t_s$?'' If so, implementing the MPC1 solution is arguably the best option. If not, the computational cost can be reduced by maintaining the same sampling rate while selecting $t_d> t_s$ to reduce the number of optimization variables. The resulting control law is preferable than increasing the sampling time is increased to match the new $t_d$. Although the advantage of HMPC over MPC2 is marginal in the case of the double integrator, the following example shows that the difference between the two schemes can be quite severe.

\begin{figure}
\vspace{10pt}
    \centering
    \includegraphics[trim= 1cm 0cm 0cm 0cm, clip=true, scale=0.42]{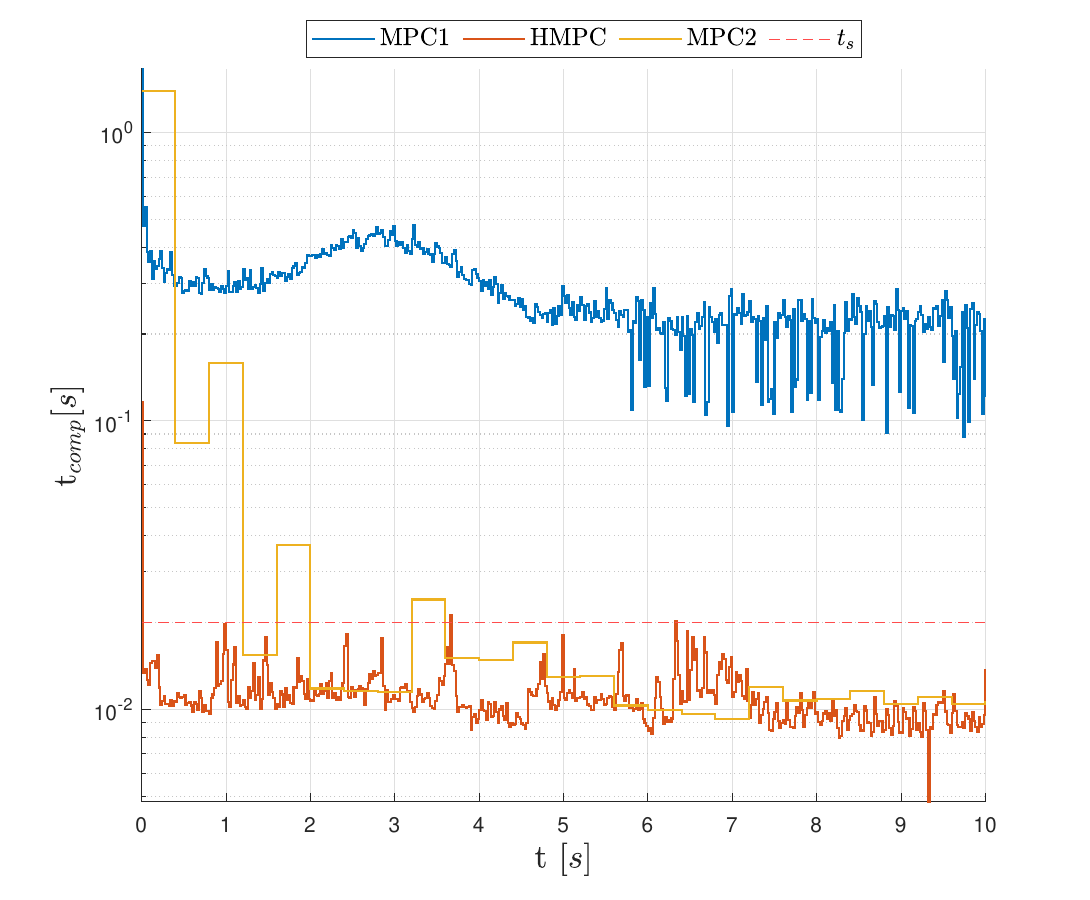}
    \caption{The computation time for the three schemes with noise, compared to the sampling time $t_s=0.02$. HMPC is te only scheme that consistently satisfies the real-time requirements. \label{time_di} }
\end{figure}

\begin{figure}
\vspace{5pt}
    \centering
    \includegraphics[trim=2cm 2.8cm 1cm 0.5cm, clip=true,scale=0.275]{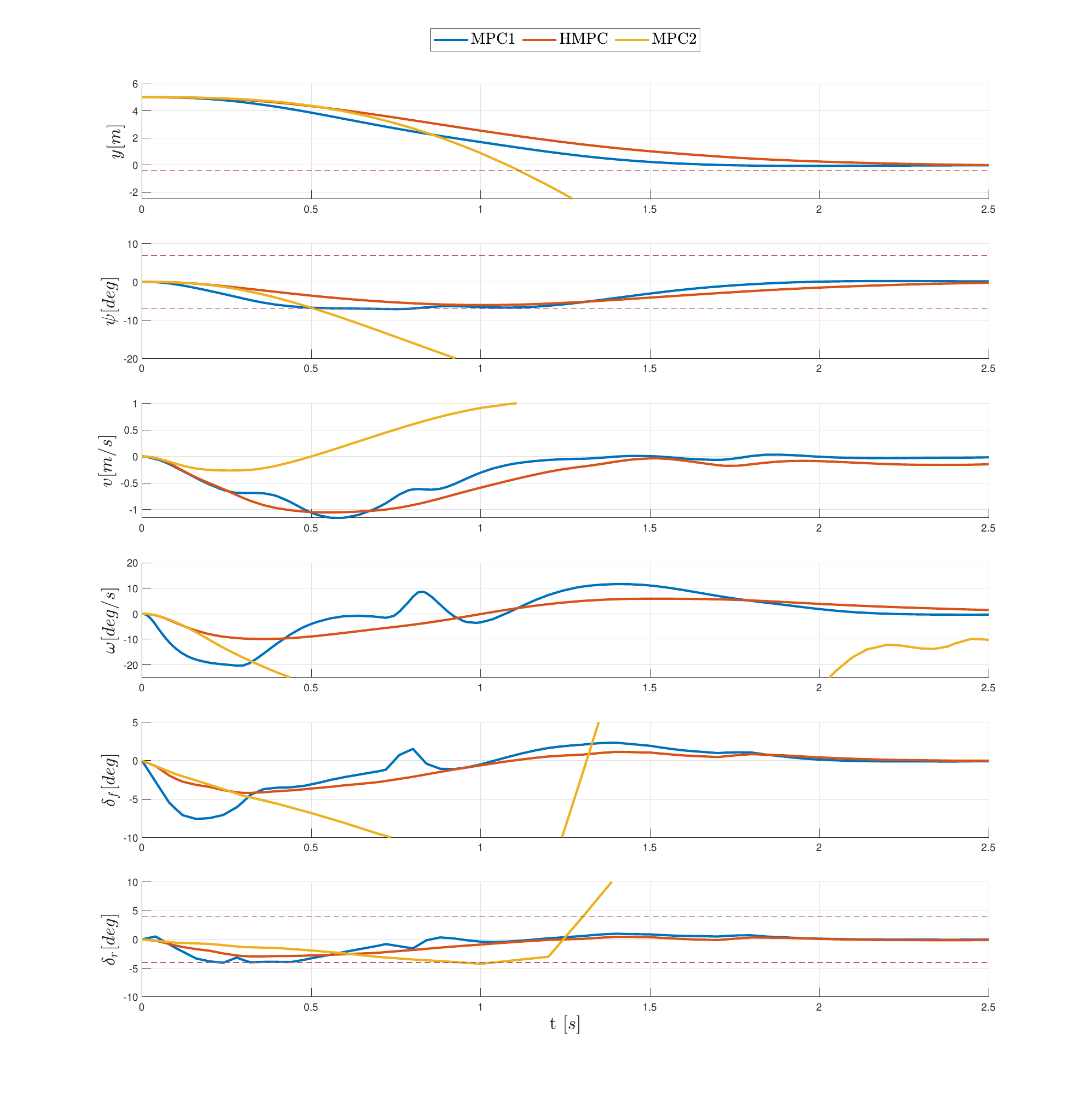}
    \caption{The state trajectories for the nonlinear lane change model. The trajectories of MPC1 and HMPC perform similarly, whereas MPC2 leads to an unstable closed-loop response.\label{nonlinear}}

\end{figure}

\subsection{Nonlinear Lane Change}
To emphasize the advantages of HMPC with a more complex example, we consider the nonlinear lane change system detailed in \cite{TDO_MPC}. The states and
control inputs are $x=[y,\psi,v,\omega,\delta_f,\delta_r]^\top$ and $u=[\dot \delta_f,\dot \delta_r]$, where, $y$ is the lateral position, $v$ is the lateral velocity, $\psi$ is the yaw angle, $\omega$ is the yaw rate, $\delta_f$ is the
front steering angle, and $\delta_r$ is the rear steering angle. The system dynamics are
\begin{subequations}\label{nonlin}
\begin{align}
\dot y &= s \ \sin(\psi) + v \ \cos(\psi)\\
\dot \psi&= \omega\\
\dot v&= -s \ \omega+\frac{F\alpha_f \cos(\delta_f)+F\alpha_r \cos(\delta_r)+F_w}{m}\\
\dot \omega&=\frac{F\alpha_f \cos(\delta_f)l_f-F\alpha_r \cos(\delta_r)l_r}{I_{zz}}\\
\dot \delta_f&= u_1, \qquad \dot \delta_r=u_2 
\end{align}
\end{subequations}
The functions $F(\alpha), \alpha_f,\alpha_r, F_w$ along with parameters such as $m,I_{zz},l_f,l_r,\mu$ can be found in \cite{TDO_MPC}. We take the initial condition $x_0=[5,0,0,0,0,0]^\top$ and use the same cost function as \eqref{OCP}, with $Q=I_6$, $R=I_2$, and $T=2$. The constraints on the system are 
\begin{subequations}
    \begin{align*}
    y\in[-0.4,10], &&\psi \in[-7^{\circ},7^{\circ}]\\
    \delta_f\in[-35^\circ,35^\circ],&&\delta_r\in[-4^\circ,4^\circ]\\
u_1\in[-1.2,1.2],&& u_2\in[-0.6, 0.6]
    \end{align*}
\end{subequations}

Figure \ref{nonlinear} and Figure \ref{nonlinear_in} compare the state and input trajectories obtained using three different schemes: 
\begin{itemize}
    \item MPC1: $\:t_s=t_d=0.04$;
    \item HMPC: $t_s=0.04$ and $t_d=0.2$;
    \item MPC2: $\:t_s=t_d=0.2$.
\end{itemize}

The trajectories are comparable for the MPC1 and HMPC schemes, whereas implementing MPC2 causes the system to become unstable. This is due to the fact that, although $0.4\mathrm s$ is too large to be a suitable sampling time, it is still sufficiently small to ensure that \eqref{dt_ocp} is a suitable approximation of \eqref{ct_ocp}. \smallskip

Figure \ref{nonlinear_time} plots the computational time taken at each step for solving the OCP for the three schemes. As with the double integrator, the time taken to solve the OCP of MPC1 is consistently larger  more than the time used for solving the OCP of the HMPC (5 times on average). 
Note that the time taken to implement MPC2 was omitted from the figure due to the fact that its unstable behavior led to significantly larger computation times. Once again, we note that MPC1 cannot be implemented in real-time because the computational time required to solve the OCP is higher than the sampling time $t_s$. Conversely, HMPC significantly reduce the computational effort by choosing a coarse discretization step $t_d$ while keeping $t_s$ unchanged.

\begin{figure}
\vspace{5pt}
    \centering
    \includegraphics[trim=2cm 2.5cm 0.75cm 0.0cm, clip=true,scale=0.265]{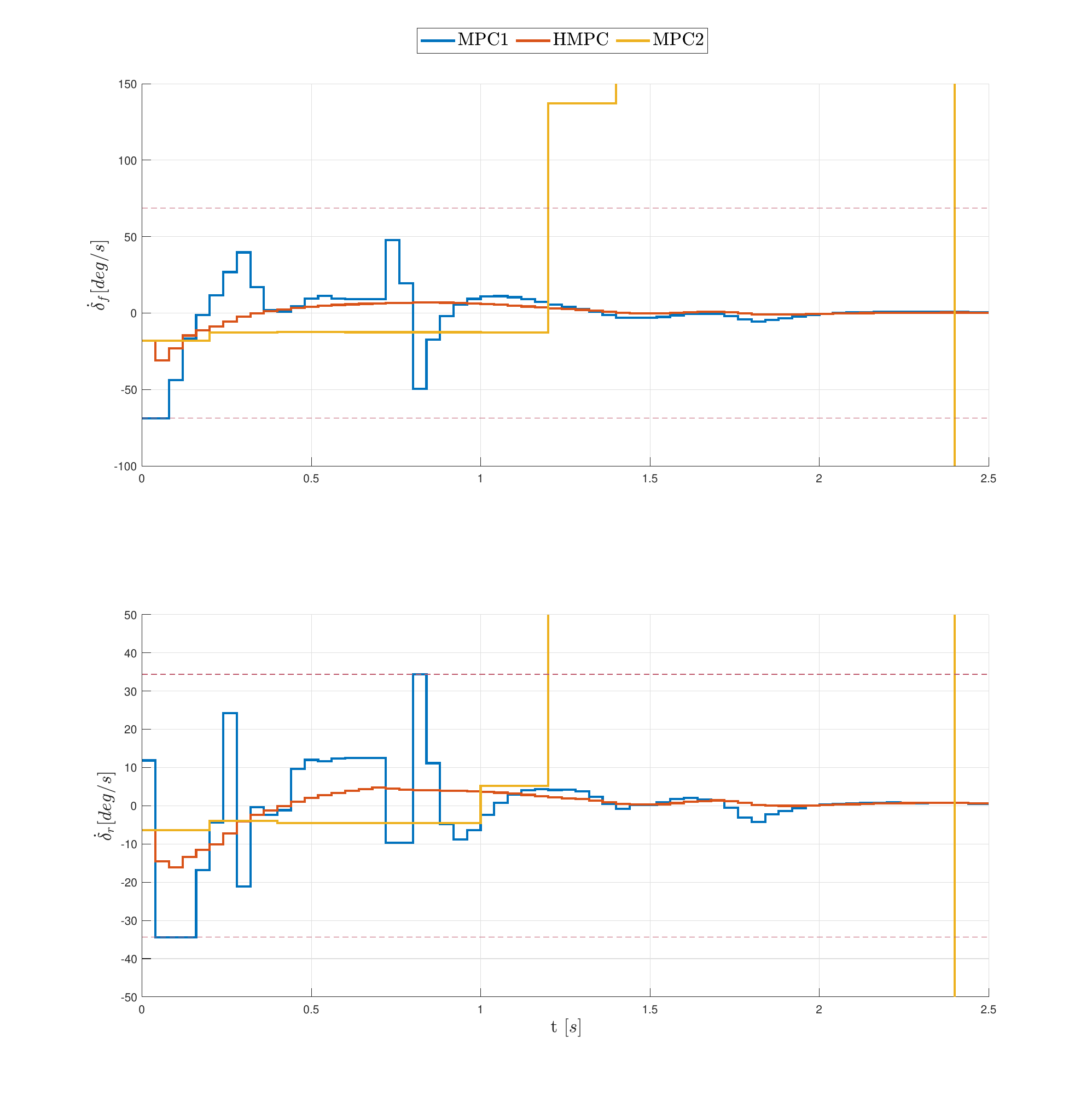}
    \caption{The input trajectory for the nonlinear lane change model. \label{nonlinear_in} }

\end{figure}

\begin{figure}
    \centering
    \includegraphics[trim=0cm 0cm 0cm 0cm, clip=true,scale=0.5]{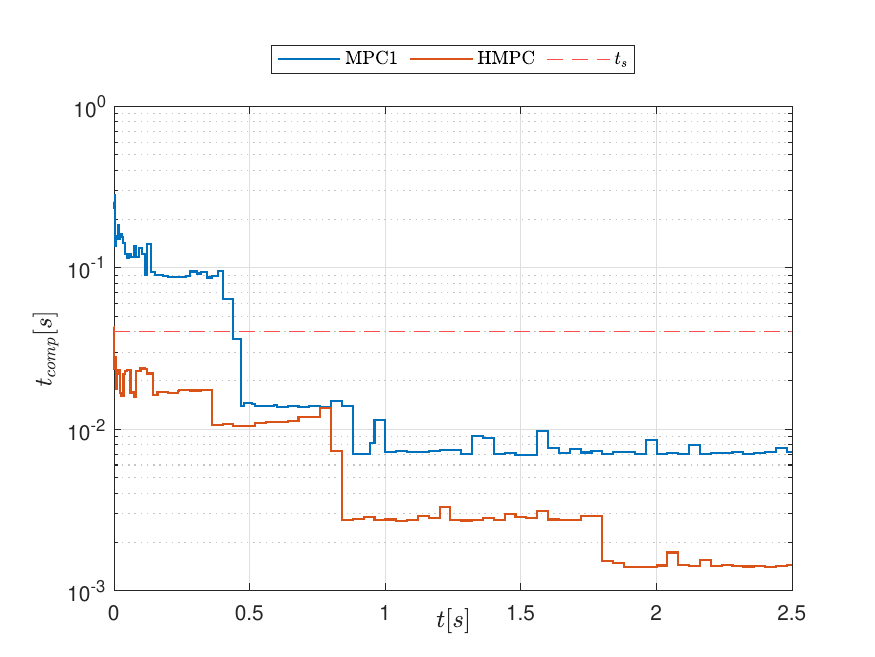}
    \caption{Computation time utilized by MPC1 and HMPC compared to the sampling time $t_s$. The computation time of MPC2 is not included due to the fact that the response is unstable. \label{nonlinear_time} }

\end{figure}

\section{Conclusion and Future Work}
This paper analyzed the stability properties of Hypersampled Model Predictive Control, which is a stratagem for decoupling the sampling time of MPC from the computational complexity of the underlying optimal control problem. Unlike existing MPC stability proofs, the proposed stability analysis does not assume that the system dynamics match the prediction model. This important distinction enables the sampling time and discretization time to be treated as two fully independent quantities. By enabling the control designer to select each time constants based on different considerations, HMPC provides a simple yet effective way to increase the sampling rate of MPC without increasing its computational footprint. Theoretical analysis shows that the HMPC scheme is Input-to-State Stable to input disturbances. Future work will focus on ensuring robust constraint enforcement.

\end{document}